\newtheorem*{theorem*}{Theorem}
\newtheorem*{remark*}{Remark}
\newcommand{ \bs }{ \boldsymbol }
\newcommand{ \myind }{ \overline{\alpha} }
\def\moverlay{\mathpalette\mov@rlay}
\def\mov@rlay#1#2{\leavevmode\vtop{%
   \baselineskip\z@skip \lineskiplimit-\maxdimen
   \ialign{\hfil$\m@th#1##$\hfil\cr#2\crcr}}}
\newcommand{\charfusion}[3][\mathord]{
    #1{\ifx#1\mathop\vphantom{#2}\fi
        \mathpalette\mov@rlay{#2\cr#3}
      }
    \ifx#1\mathop\expandafter\displaylimits\fi}
\newcommand{\cupdot}{\charfusion[\mathbin]{\cup}{\cdot}}
\newcommand{\bigcupdot}{\charfusion[\mathop]{\bigcup}{\cdot}}
\theoremstyle{plain}
\newtheorem{lemma}{Lemma}
\newtheorem{theorem}[lemma]{Theorem}
\newtheorem{remark}[lemma]{Remark}
\begin{document}

\author{Mladen Kova\v{c}evi\'{c}}
\email{kmladen@uns.ac.rs\\}
\affiliation{University of Novi Sad, Serbia}

\title{Signaling to Relativistic Observers:\\\vspace{1mm}An Einstein--Shannon--Riemann Encounter}

\date{October 15, 2020}

\begin{abstract}
A communication scenario is described involving a series of events triggered
by a transmitter and observed by a receiver experiencing relativistic time
dilation.
The message selected by the transmitter is assumed to be encoded in the events'
timings and is required to be perfectly recovered by the receiver, regardless
of the difference in clock rates in the two frames of reference.
It is shown that the largest proportion of the space of all $ k $-event signals
that can be selected as a code ensuring error-free information transfer in this
setting equals $ \zeta(k)^{-1} $, where $ \zeta $ is the Riemann zeta function.
\end{abstract}

\keywords{Time dilation, clock drift, error correction, zero-error code, Riemann
zeta function.}

\pacs{89.70.Kn, 03.30.+p, 02.10.De}

\maketitle

\section{Introduction}

\vspace{-3mm}
Information theory \cite{shannon, csiszar+korner} -- one of the major
scientific achievements of the second half of the 20'th century -- was
developed by Shannon as a formal framework for the study of transmission
and processing of information in the classical domain.
In this paper, we introduce and analyze a problem which brings information
theory in the relativistic context and is, in particular, meant to illustrate
the effects of time dilation \cite{einstein, mermin} on the ultimate limits
of information transfer.

\vspace{-3mm}
\subsection{Model description}

\vspace{-3mm}
Consider the following model of communication between two parties:
Alice triggers $ k $ events at moments $ \tilde{t}_1, \tilde{t}_2, \ldots, \tilde{t}_k $,
which are selected from the set of integers $ \{1, 2, \ldots, N\} $ according
to a clock in her reference frame, that is $ \tilde{t}_i \in \{1, 2, \ldots, N\} $,
$ 1 \leqslant \tilde{t}_1 < \tilde{t}_2 < \cdots < \tilde{t}_k \leqslant N $,
and they are detected by Bob at moments
$ \alpha \tilde{t}_1, \alpha \tilde{t}_2, \ldots, \alpha \tilde{t}_k $
according to his own clock.
The factor $ \alpha $, modeling the difference in clock rates in the two
frames of reference, is not a priori known by either side.
(The two parties are assumed to be synchronized in the sense that they
have agreed on the beginning of time; this may be achieved by triggering
an additional event at time $ \tilde{t} = 0 $ in Alice's frame, upon
detecting of which Bob will set his own clock to $ \tilde{t} = 0 $ as well.)
It will be convenient to describe the signals by specifying the intervals
between consecutive events, rather than the time elapsed from the moment
$ \tilde{t} = 0 $ to each of the events.
In this notation, the set of all possible signals that may be transmitted
by Alice is represented by:
\begin{equation}
  T_{N, k}  =
	\left\{ (t_1, \ldots, t_k) \in \mathbb{N}^k  :  \sum_{i=1}^k t_i \leqslant N \right\} ,
\end{equation}
where $ \mathbb{N} = \{1, 2, \ldots \} $ denotes the set of positive integers.

Recovering the ``transmitted $ k $-tuple'' $ \bs{t}  =  ( t_1, t_2, \ldots, t_k ) $
from the ``received $ k $-tuple'' $ \alpha \bs{t} $ is trivial if the time
dilation factor $ \alpha $ is known and fixed, and, since the $ k $-tuple
$ \bs{t} $ is selected arbitrarily from among $ \binom{N}{k} $ possibilities,
$ \log_2\!\binom{N}{k} $ bits of information can be conveyed to Bob in this
way.
However, under our assumption that $ \alpha $ is not a priori known, not
all choices of $ \bs{t} $ can be used for reliable communication as some
of them are indistinguishable by the receiver.
For example, for $ k = 2 $, if Bob were to detect the two events at the
moments $ (2.1, 4.2) $, it would be impossible for him to deduce with
certainty whether $ \bs{t} = (1, 2) $ or $ \bs{t} = (2, 4) $.
In this case, the two parties have to agree beforehand to restrict the set
of allowed signals $ \bs{t} $ to a proper subset of $ T_{N,k} $ (a code)
in order to enable Bob to always infer the transmitted $ k $-tuple correctly,
regardless of $ \alpha $.
In other words, in the above scenario time dilation represents signal
distortion, and the following question then naturally arises:
how many bits of information can be reliably conveyed to the receiver
by using the simple form of communication just described?%

We adopt an information-theoretic approach and model the signal distortion
-- the unknown factor $ \alpha $ --  as an absolutely continuous random
variable with probability density function $ p_{\alpha}(\cdot) $ whose
support is either $ [1, \myind] $, for some constant $ \myind \in (1, \infty) $,
or $ [1, \infty) $.
The seemingly more general model where the support of $ p_{\alpha}(\cdot) $
is $ \big[ \underline{\alpha}, \myind \big] $, is equivalent to the case
$ \big[ 1, \myind / \underline{\alpha} \big] $ from the communication
viewpoint, so the lower end of the interval may be taken to be $ 1 $
without loss of generality.

\begin{remark}
\textnormal{
The difference in clock rates that is assumed in the model may be caused
by various physical effects, such as relative motion between Alice and Bob,
difference in the gravitational potential between them, clock imperfections,
etc.
We note that our results apply to all models with a linear change of time
scale $ \bs{t} \mapsto \alpha \bs{t} $ and do not depend on the physical
reasons causing this change.%
}%
\end{remark}

\vspace{-3mm}
\subsection{Zero-error codes}

\vspace{-3mm}
We say that two input signals $ \bs{t}', \bs{t}'' \in T_{N, k} $ are
\emph{confusable} if the receiver can confuse one with the other with
positive probability, meaning that the corresponding sets of output signals
$ \{ \alpha' \bs{t}' \} $ and $ \{ \alpha'' \bs{t}'' \} $
have infinite intersection (here $ \alpha' $ and $ \alpha'' $ vary through
the support of $ p_{\alpha}(\cdot) $).
A subset $ S \subseteq T_{N,k} $ is called a \emph{zero-error code}
\cite{shannon_ze} if any two distinct elements of $ S $ are non-confusable.
Elements of a code are called codewords.
Hence, a zero-error code is a set of permissible signals that can be
unambiguously distinguished by the receiver.
In other words, based on the observed signal, Bob will be able to identify
the codeword which produced that signal, and thereby recover the transmitted
information, with probability $ 1 $.

A zero-error code $ S \subseteq T_{N,k} $ is said to be optimal if it has
the largest cardinality among all such codes in $ T_{N,k} $.
This maximum cardinality is in general difficult to determine exactly for
arbitrary parameters $ N, k $, and it is instructive to focus on its asymptotic
behavior.
To that end, we define the maximum asymptotic \emph{density} of codes in the
space of $ k $-event signals:%
\begin{equation}
\label{eq:deltadef}
  \delta_{\myind}(k)
	 = \lim_{N \to \infty} \max_{S \subseteq T_{N, k}} \frac{|S|}{\binom{N}{k}} ,
\end{equation}
where the maximum is taken over all zero-error codes $ S \subseteq T_{N, k} $,
and where $ \myind \in (1, \infty] $ is the supremum of the support of
$ p_{\alpha}(\cdot) $.
(It will be evident from the analysis that the cardinality of optimal codes
depends on $ p_{\alpha}(\cdot) $ only through $ \myind $, justifying the
notation in \eqref{eq:deltadef}.)
Our aim is to characterize the quantity $ \delta_{\myind}(k) $ for every
$ k $ and $ \myind $.

\vspace{-3mm}
\section{Optimal signal sets and their density}

\vspace{-3mm}
We first consider the case of unbounded indeterminacy of the time dilation
factor, meaning that the support of the probability density function
$ p_{\alpha}(\cdot) $ is the entire half-line $ [1, \infty) $.
Time dilation distorts the signal represented by the point $ \bs{t} \in T(N, k) $
by multiplying this point by a random factor $ \alpha $, or equivalently,
by moving the point $ \bs{t} $ by a random amount along its ``line of sight''
from the origin; see Fig.~\ref{fig:code}.
Since any two points lying on the same line of sight from the origin are
confusable (because $ \operatorname{Pr}\{\alpha \geq \alpha_0\} > 0 $ for
any fixed $ \alpha_0 $), no two such points can belong to the same zero-error
code.
Therefore, an optimal code is obtained by selecting exactly one point on
each line of sight, and the simplest choice is to select the first point
encountered on each line.
The points $ \bs{t} $ in the grid $ \mathbb{N}^k $ that are encountered
first when going along the straight lines from the origin are those that
satisfy the condition $ \gcd(\bs{t}) \equiv \gcd(t_1, \ldots, t_k) = 1 $,
where $ \gcd $ denotes the greatest common divisor.
This shows that an optimal zero-error code for this model is:
\begin{equation}
\label{eq:code}
  C_{N,k}  =  \Big\{ \bs{t} \in T_{N,k} : \gcd(\bs{t}) = 1 \Big\} .
\end{equation}

\begin{figure}
\centering
  \includegraphics[width=0.95\columnwidth]{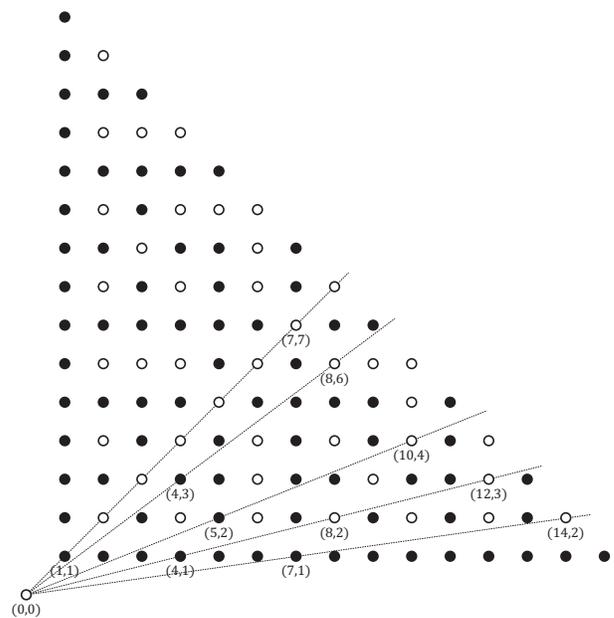}
\caption{The space $ T_{16,2} $ representing the signals consisting of
$ k = 2 $ events, each occurring in one of $ N = 16 $ time slots, and
the code $ C_{16,2} $ consisting of all the points $ \bs{t} = (t_1, t_2) \in T_{16,2} $
satisfying $ \gcd(t_1, t_2) = 1 $.
Codewords of $ C_{16,2} $ are represented as black dots.
Several ``lines of sight'' from the origin are also illustrated.}%
\label{fig:code}
\end{figure}%

In the case when the support of $ p_{\alpha}(\cdot) $ is a finite interval
$ [1, \myind] $, a code can be constructed by applying the following greedy
procedure on every line of sight from the origin:
select the first point $ \bs{t} $ on the line (the one satisfying
$ \gcd(\bs{t}) = 1 $) as a codeword, and exclude the points in
$ \big\{ \alpha \bs{t} : \alpha \in [1, \myind) \big\} $ which are confusable
with $ \bs{t} $;
then select the next point encountered on the line, $ \lceil \myind \rceil \bs{t} $,
as a codeword, and exclude the points in
$ \big\{ \alpha \lceil \myind \rceil \bs{t} : \alpha \in [1, \myind) \big\} $;
and so on.
This iterative procedure results in the following set:
\begin{equation}
\begin{aligned}
\label{eq:code_d}
  D_{N, k}
   = \left( \bigcupdot_{n=1}^{\infty}  d_{n} C_{N, k}  \right)  \cap  T_{N,k} ,
\end{aligned}
\end{equation}
where $ (d_{n})_{n=1}^{\infty} $ is the sequence defined by the following
recursion:
\begin{equation}
\label{eq:d}
  d_{n} = \lceil \myind d_{n-1} \rceil , \qquad
  d_{1} = 1 ,
\end{equation}
$ d_{n} C_{N, k} $ stands for
$ \big\{ d_{n} \bs{t} : \bs{t} \in C_{N, k} \big\} $, and
the notation $ \cupdot $ emphasizes that the sets in the union are pairwise
disjoint.
For reasons of simplicity, we do not make the dependence of $ d_n $ on
$ \myind $ explicit in the notation.
Note that, for $ \myind \in \mathbb{N} $, $ \myind \geqslant 2 $,
\eqref{eq:d} simplifies to:
\begin{equation}
\label{eq:dint}
  d_{n} = \myind^{n-1} .
\end{equation}
It is evident from the construction that $ D_{N, k} $ is indeed a zero-error
code: the only way two different codewords $ d_n \bs{t} $ and $ d_{n+1} \bs{t} $
can produce the same signal at the receiver's end is if $ \alpha $ takes the
value $ \myind $ (and if $ \myind \in \mathbb{N} $), which is a zero-probability
event.
We demonstrate in the following theorem that the code $ D_{N, k} $ is in fact
optimal and, based on this observation, we obtain a characterization of the
maximum asymptotic density $ \delta_{\myind}(k) $.
In order for the statement to be valid for the trivial case $ \myind = 1 $
as well, we define $ d_n = n $ for $ \myind = 1 $, which is justified by
a continuity argument (taking the limit $ \myind \to 1 $ in \eqref{eq:d}).

Recall the definition of the Riemann zeta function \cite{riemann}:%
\begin{equation}
\label{eq:zeta}
  \zeta(k)  =  \sum_{n=1}^{\infty} n^{-k} .
\end{equation}

\begin{theorem}
\label{thm:main}
Fix $ k \in \mathbb{N} $, $ k \geq 2 $, and $ \myind \in [1, \infty) $, and
let $ (d_{n})_{n=1}^{\infty} $ be the sequence defined in \eqref{eq:d}.
We then have:
\begin{align}
\label{eq:deltad}
  \delta_{\myind}(k)  =  \zeta(k)^{-1}  \sum_{n=1}^{\infty}  d_{n}^{-k} .
\end{align}
For $ \myind = \infty $, we have:
\begin{equation}
\label{eq:delta}
  \delta_{\infty}(k)  =  \zeta(k)^{-1} .
\end{equation}
\end{theorem}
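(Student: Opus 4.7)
The plan is to reduce the problem to a ray-by-ray analysis. Since confusability of two points $\bs{t}', \bs{t}''$ in $T_{N,k}$ requires $\alpha' \bs{t}' = \alpha'' \bs{t}''$ for some admissible $\alpha', \alpha''$, they must be scalar multiples of one another; a short divisibility argument (using that the $\gcd$ of the primitive representative is $1$) then shows that they are in fact positive integer multiples of a common primitive vector $\bs{s}$ with $\gcd(\bs{s})=1$. Consequently $T_{N,k}$ decomposes as a disjoint union of rays $R_{\bs{s}} = \{n\bs{s} : 1\leq n\leq M(\bs{s})\}$, with $M(\bs{s})=\lfloor N/(s_1+\cdots+s_k)\rfloor$ and $\bs{s}\in C_{N,k}$, and confusability never occurs across distinct rays. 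Hence any zero-error code attains its maximum size by being optimal on each ray independently.

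On a fixed ray I would observe that $m\bs{s}$ and $n\bs{s}$ (with $m<n$) are confusable precisely when the intervals $[m,m\myind]$ and $[n,n\myind]$ overlap on a set of positive length, i.e.\ iff $n/m<\myind$. Non-confusability for a chosen index sequence $n_1<n_2<\cdots$ thus reads $n_{i+1}\geq\lceil\myind n_i\rceil$. A straightforward induction shows that any such sequence dominates $(d_i)$ termwise: $d_1=1\leq n_1$, and if $d_i\leq n_i$ then $d_{i+1}=\lceil\myind d_i\rceil\leq\lceil\myind n_i\rceil\leq n_{i+1}$. Hence the maximum number of codewords placeable on $R_{\bs{s}}$ is $\max\{n : d_n\leq M(\bs{s})\}$, and this is precisely what the greedy construction \eqref{eq:code_d} achieves, so $D_{N,k}$ is optimal.

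Finally I would compute the density by swapping the summation order. Using $d_n\bs{t}\in T_{N,k}\iff\bs{t}\in T_{\lfloor N/d_n\rfloor,k}$, the construction \eqref{eq:code_d} gives
\begin{equation*}
|D_{N,k}| \;=\; \sum_{n\geq 1}\bigl|C_{\lfloor N/d_n\rfloor,k}\bigr|.
\end{equation*}
A standard M\"obius-inversion argument applied to the identity $\binom{M}{k}=\sum_{d\geq 1}|C_{\lfloor M/d\rfloor,k}|$ yields $|C_{M,k}|/\binom{M}{k}\to 1/\zeta(k)$, and combined with $\binom{\lfloor N/d_n\rfloor}{k}/\binom{N}{k}\to d_n^{-k}$ this gives the desired termwise limits. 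The main technical hurdle is exchanging the limit with the infinite sum; this I would handle by dominated convergence, since for $\myind>1$ the recursion \eqref{eq:d} forces $d_n\geq\myind^{n-1}$, making the tail geometrically summable. The $\myind=\infty$ case is immediate: only $n=1$ contributes in \eqref{eq:deltad}, yielding \eqref{eq:delta}.
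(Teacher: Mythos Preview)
Your proof is correct and self-contained, but it proceeds by a genuinely different route than the paper's. For optimality of $D_{N,k}$, the paper invokes a general criterion of Shannon (a code $S\subseteq T$ is optimal whenever there is a map $f:T\to S$ sending non-confusable points to distinct codewords) and simply exhibits such an $f$; you instead give the more elementary ray-by-ray greedy argument, proving directly by induction that any admissible index sequence on a ray dominates $(d_n)$ termwise. For the density, the paper first reduces \eqref{eq:deltad} to \eqref{eq:delta} and then obtains $\delta_\infty(k)=\zeta(k)^{-1}$ by a neat trick: plug $\myind=1$ (where $d_n=n$ by convention and the optimal code is trivially all of $T_{N,k}$) into the intermediate relation $\delta_{\myind}(k)=\delta_\infty(k)\sum_n d_n^{-k}$ to read off $1=\delta_\infty(k)\,\zeta(k)$. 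You instead compute $|C_{M,k}|/\binom{M}{k}\to\zeta(k)^{-1}$ directly via M\"obius inversion. Your treatment is also more explicit about the analytic step the paper passes over in silence---exchanging the limit $N\to\infty$ with the sum over $n$---which you justify by dominated convergence using $d_n\geq\myind^{\,n-1}$; note only that this bound needs $\myind>1$, so the boundary case $\myind=1$ still requires the one-line separate observation (no two points are confusable, density $1$).
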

\begin{proof}
Optimality of the code $ D_{N,k} $ follows directly from the result of
Shannon \cite[Thm 3]{shannon_ze}, which states that a zero-error code
$ S \subseteq T $ is optimal if there exists a mapping $ f : T \to S $
with the property that $ f(\bs{t}') \neq f(\bs{t}'') $ for any two non-confusable
$ \bs{t}', \bs{t}'' \in T $.
The required function $ f : T_{N,k} \to D_{N,k} $ in our case is defined
by mapping all the points in the set
$ \big\{ \alpha d_n \bs{t} : \alpha \in [1, \myind) \big\} \cap T_{N,k} $ to
$ d_n \bs{t} $, for every $ \bs{t} \in T_{N,k} $ with $ \gcd(\bs{t}) = 1 $,
and every $ n \geqslant 1 $.
Since any two non-confusable points in $ T_{N,k} $ belong to different sets
of the form $ \big\{ \alpha d_n \bs{t} : \alpha \in [1, \myind) \big\} $
(i.e., either $ n $ is different, or $ \bs{t} $ is, or both), they have
different images under $ f $.
Therefore, it follows from the quoted result of Shannon that $ D_{N,k} $ is
optimal, implying that:
\begin{equation}
\label{eq:deltak}
  \delta_{\myind}(k)  =  \lim_{N\to\infty} \frac{|D_{N,k}|}{\binom{N}{k}} .
\end{equation}
Now consider the relation \eqref{eq:code_d} and note that the subcodes
$ \big( d_{n} C_{N, k} \big) \cap T_{N, k} $ can be written as:
\begin{equation}
\begin{aligned}
  \big( d_{n} C_{N, k} \big) \cap T_{N, k}
	 &= \Big\{ \bs{t} \in T_{N, k} : \gcd(\bs{t}) = d_{n} \Big\}  \\
   &= d_{n} C_{\lfloor N/d_{n} \rfloor, k} .
\end{aligned}
\end{equation}
Their asymptotic density therefore equals:
\begin{equation}
\begin{aligned}
	\lim_{N \to \infty}  \frac{ \big| \big( d_{n} C_{N, k} \big) \cap T_{N, k} \big| }{ |T_{N, k}| }
  &= \lim_{N \to \infty}  \frac{ \big| C_{\lfloor N/d_{n} \rfloor, k} \big| }{ \binom{N}{k} }  \\
	&= \lim_{N' \to \infty}  \frac{ | C_{N', k} | }{ \binom{d_{n} N'}{k} }  \\
	&= \lim_{N' \to \infty}  \frac{ | C_{N', k} | }{ d_{n}^k \binom{N'}{k} }  \\
\label{eq:subcode}
	&= d_{n}^{-k}  \delta_{\infty}(k) ,
\end{aligned}
\end{equation}
where we have used the fact that the code $ C_{N,k} $ from \eqref{eq:code}
is optimal for the case $ \myind = \infty $ and hence:
\begin{align}
\label{eq:densinf}
  \delta_{\infty}(k) = \lim_{N \to \infty}  \frac{ | C_{N, k} | }{ \binom{N}{k} } .
\end{align}
Now \eqref{eq:code_d}, \eqref{eq:deltak} and \eqref{eq:subcode} imply:
\begin{equation}
\label{eq:deltad1}
  \delta_{\myind}(k)  =  \delta_{\infty}(k)  \sum_{n=1}^{\infty}  d_{n}^{-k} .
\end{equation}
Proving \eqref{eq:deltad} is thus reduced to proving \eqref{eq:delta}.
The density $ \delta_{\infty}(k) $ can be determined from \eqref{eq:code},
\eqref{eq:densinf}, and the fact that the probability of $ k $ random
positive integers being relatively prime equals $ \zeta(k)^{-1} $
\cite{nymann}, but we give here a direct derivation as well.
To this end, note that $ \delta_{1}(k) = 1 $ for every $ k $.
This is because the condition $ \myind = 1 $ means that the time dilation
factor is known exactly at the receiver, and hence the optimal code for
this case is trivially the set of all possible input signals, $ T_{N, k} $,
which has density $ 1 $.
We then get from \eqref{eq:deltad1} and the fact that $ d_{n} = n $
when $ \myind = 1 $:
\begin{equation}
\begin{aligned}
  1  
     =  \delta_{\infty}(k)  \sum_{n=1}^{\infty} n^{-k}
     =  \delta_{\infty}(k)  \zeta(k) ,
\end{aligned}
\end{equation}
which completes the proof of the theorem.
\end{proof}

For integral values of $ \myind $, the resulting density can be expressed
explicitly due to \eqref{eq:dint}.
Namely, for $ \myind \in \mathbb{N} $, $ \myind \geqslant 2 $:%
\begin{equation}
  \delta_{\myind}(k)  =  \frac{\myind^{k}}{\zeta(k) \big(\myind^{k}-1\big)} .
\end{equation}

As we have seen, when nothing is known about the factor $ \alpha $, the
relation \eqref{eq:delta} is obtained as an important special case of
\eqref{eq:deltad} (or, rather, the limiting case as $ \myind \to \infty $).
In particular, the largest asymptotic density of a set of $ 2 $-event
signals that are distinguishable by a receiver experiencing time dilation is:%
\begin{equation}
  \delta_{\infty}(2) = \frac{6}{\pi^2} .
\end{equation}
More generally, for any even $ k = 2m $:
\begin{equation}
  \delta_{\infty}(2m) = \frac{ (-1)^{m+1} 2 (2m)! }{ (2 \pi)^{2m} B_{2m} } ,
\end{equation}
where $ B_k $ are the Bernoulli numbers \cite[Sec.~1.5]{edwards}.

\begin{remark}
\textnormal{
A model closely related to the one presented in this paper, where, instead
of relativistic time dilation, signals are distorted by a synchronization
error known as clock drift, was analyzed in \cite{yeung, shaviv}.
Optimal zero-error codes for this model were determined in \cite{shaviv},
though no estimate of their cardinality was given.
It should be mentioned that the notion of ``confusability'' was defined in
\cite{shaviv} by the requirement that the signals in question cannot produce
the same signal at the receiver's end.
This definition is slightly different from ours which requires that the two
input signals causing confusion at the receiver should be a zero-probability
event.
The consequent characterization of optimal codes in \cite{shaviv} is
similar to \eqref{eq:code_d}, but with the sequence $ (b_n)_{n=1}^{\infty} $
in place of $ (d_n)_{n=1}^{\infty} $, where:
\begin{equation}
\label{eq:b}
  b_{n} = \lfloor \myind b_{n-1} + 1 \rfloor , \qquad
  b_{1} = 1 .
\end{equation}
In particular, for $ \myind \in \mathbb{N} $:
\begin{equation}
\label{eq:bint}
  b_{n} = 1 + \myind + \myind^2 + \cdots + \myind^{n-1} = \frac{ \myind^n - 1 }{ \myind - 1 } .
\end{equation}
It can be shown, in the same way as in the proof of Theorem \ref{thm:main},
that the largest asymptotic density of codes defined as in \cite{shaviv} equals:
\begin{align}
\label{eq:deltab}
  \zeta(k)^{-1}  \sum_{n=1}^{\infty}  b_{n}^{-k} .
\end{align}
Notice that $ b_n \geqslant d_n $ for every $ n \geqslant 1 $.
In particular, for $ \myind \in \mathbb{N} $, $ \myind \geqslant 2 $,
the inequality is strict for every $ n \geqslant 2 $, so the density
in \eqref{eq:deltab} is strictly smaller than the density $ \delta_{\myind}(k) $
from \eqref{eq:deltad}.
However, this is not always the case; for irrational $ \myind $, we have
$ b_n = d_n $ for every $ n \geqslant 1 $, and so the two densities are equal.
}
\end{remark}

\vspace{-3mm}
\section{Closing remarks}

\vspace{-3mm}
Quantifying information content and determining the fundamental limits of
information transmission are two of the main directions of study in information
theory.
The lines of research that explore such questions in various physical systems
have a long history in science, most notably in quantum information theory.
In this paper, we have described a scenario where information transfer is
considered in the relativistic context, and we have presented a result that
quantifies the limits of communication in this model.
The obtained solution, though simple, is interesting in that it provides a
link between Shannon's information theory, special relativity, and number
theory.
As such, it calls for further investigation of this and related problems, in
particular of the much more difficult case of non-linear change of time scale,
which may be caused by, e.g., accelerated observers.

\vspace{-3mm}
\section{Acknowledgment}

\vspace{-3mm}
This work was supported by the European Union's Horizon 2020 research
and innovation programme under Grant Agreement number 856967, and
by the Ministry of Education, Science and Technological Development of
the Republic of Serbia through the project number 451-03-68/2020-14/200156.

\end{document}